\begin{document}
\title{$2$-CLUB is \textsf{NP}-hard for distance to $2$-club cluster graphs}
\author{Mithilesh Kumar}
\institute{
  Simula@UiB\\
  Norway\\
  \email{mithilesh@simula.no}
}
\maketitle
\begin{abstract}
We show that $2$-CLUB is \textsf{NP}-hard for distance to $2$-club cluster graphs.
\end{abstract}
\section{Introduction}
A complete graph or clique is a graph that contains an edge for every pair of distinct vertices. Diameter of a graph is the length of a longest shortest path in the graph. Any clique has diameter $1$. A generalization of this notion is $s$-club, a graph of diameter $s$. In general graphs, finding a set of vertices that induces a subgraph of diameter $s$ is \textsf{NP}-hard. For $s=2$, Hartung et al \cite{HartungKN13} have studied the problem with many structral restrictions on the input graph. This paper answers one of the open problems mentioned in \cite{HartungKN13}.

Given a class of graphs with some property $\Pi$, we can define another class of graphs by the parameter distance to $\Pi$, namely the number of vertices that needs to be deleted from the graph to make the resultant graph have property $\Pi$. For example, distance $2$ to bipartiteness defines a class of graphs that become bipartite after deleting at most $2$ vectices. A graph where each connected component is an $s$-club is called $s$-club cluster graph. In this paper, we show that finding $2$-club in distance $d$ to $2$-club cluster graphs is \textsf{NP}-hard for $d\geq 2$.

\section{Constant Distance to 2-club cluster}
We define the $2$-CLUB problem as follows: Given an undirected graph $G = (V,E)$ and $k\in \mathbb{N}$, is there a vertex set $S\subseteq V$ of size at least $k$ such that $G[S]$ has diameter at most $2$?
\begin{theorem}
 2-CLUB is \textsf{NP}-hard even on graphs with distance two to $2$-club cluster.
\end{theorem}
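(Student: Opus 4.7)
The plan is to establish NP-hardness by a polynomial-time reduction from a known NP-hard problem---a natural candidate is $2$-CLUB on unrestricted graphs (which is classically NP-hard), though CLIQUE or a SAT variant could also work. Given a source instance $(G,k)$, I would build a graph $G'$ that (i) lies in the distance-$2$-to-$2$-club-cluster class, and (ii) contains a $2$-club of some threshold size $k'=f(k)$ iff $G$ has a solution of size~$k$.

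For the construction I would introduce two ``hub'' vertices $u,v$ and let $V(G')\setminus\{u,v\}$ be a disjoint family of small $2$-club gadgets (for instance single edges, triangles, or small stars), with one gadget per vertex (or edge) of~$G$. Crucially, every edge of $G'$ between two distinct gadgets is incident to either $u$ or~$v$. As an immediate consequence, $G'-\{u,v\}$ equals the disjoint union of the gadgets, a $2$-club cluster, certifying that $G'$ has distance at most~$2$ to $2$-club cluster. The adjacencies of $u$ and $v$ to the gadget vertices are chosen so as to encode the combinatorial structure of~$G$.

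The analysis of $2$-clubs in $G'$ proceeds by a case split on whether a candidate $2$-club $T$ contains $u$, $v$, both, or neither. If $T$ contains neither, then $T$ is confined to a single gadget, so choosing all gadgets to have size strictly less than~$k'$ rules out this case. Otherwise $T$ contains $u$ and/or~$v$, and for any pair $x,y\in T$ coming from different gadgets the only possible common neighbour in $G'[T]$ is $u$ or~$v$, since there are no inter-gadget edges avoiding the hubs. This forces every such cross-gadget pair to satisfy $\{x,y\}\subseteq N(u)$ or $\{x,y\}\subseteq N(v)$, a constraint that I would design to mirror exactly the $2$-club (or clique) condition in~$G$, which yields both directions of the equivalence once the threshold $k'$ is set appropriately.

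The main obstacle I anticipate is calibrating the neighbourhoods $N(u)$ and $N(v)$ so that the resulting ``covering by two hubs'' condition is genuinely NP-hard rather than collapsing into a polynomial-time task such as maximum vertex biclique (which decomposes into a small number of neat cases). The two levers available are the choice of gadget---determining how many distinct adjacency types each gadget can present to the hubs---and the choice of source problem. A secondary technical step is to verify that $G'-\{u,v\}$ really is the desired $2$-club cluster and that no unintended ``parasitic'' $2$-club of size $\geq k'$ exists in $G'$; both follow from the case analysis above once the hub adjacencies are fixed.
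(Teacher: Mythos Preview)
Your high-level framework---reduce from a hard problem and add two distinguished vertices whose deletion leaves a $2$-club cluster---is the right shape, and it matches the paper at that level. The gap is exactly the obstacle you yourself flag but do not resolve: if $G'\setminus\{u,v\}$ is a disjoint union of \emph{small} gadgets and every inter-gadget edge touches $u$ or $v$, then $2$-\textsc{Club} on $G'$ is polynomial. For any $2$-club $T\ni u,v$, every cross-gadget pair in $T$ must lie entirely in $N(u)$ or entirely in $N(v)$; a short case analysis then gives $T\setminus\{u,v\}\subseteq N(u)\cup G_i$ for a single gadget $G_i$ (or the symmetric statement with $v$), and since gadgets have bounded size one can enumerate the exceptional gadget and read off the rest in polynomial time. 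Neither the gadget shape nor the choice of source problem can repair this while you keep the ``only hubs connect gadgets'' premise, so as stated the plan does not yield a hardness proof.

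The paper avoids this by \emph{not} making the post-deletion graph a union of small pieces. It reduces from \textsc{Clique} on $H$ and introduces, besides the two deletable vertices $a,b$, a third auxiliary vertex $u$ and large padding sets $X_1$ (of size $n^3$), $X_2$ (of size $n^2-n$) and $A$ (of size $n^2$, split into $n$-blocks $V_i$, one per $v_i\in V(H)$). After deleting $a,b$, the set $X_1$ becomes isolated while everything else forms one \emph{large} $2$-club with centre $u$---still a $2$-club cluster, but far from bounded gadgets. The huge $X_1$ forces any sufficiently large $2$-club $Y$ to contain $X_1$, hence $a$ and $b$, and to \emph{exclude} $u$ (which sits at distance $3$ from $X_1$). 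With $u$ unavailable, the only length-$2$ route from $v_i\in V(H)\cap Y$ to a block vertex of $V_j\cap Y$ is through $v_j$, forcing $v_iv_j\in E(H)$; a size count on $A\cap Y$ then yields the clique. The missing ingredient in your plan is precisely this padding-and-exclusion trick: one large set to force the hubs in, arranged so that a third ``false centre'' is simultaneously forced out, leaving a component whose internal $2$-club constraints encode adjacency in $H$.
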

\begin{proof}
 We reduce from the \textsf{NP}-hard CLIQUE problem: Given a positive integer $k$ and a graph $H$, the question is whether there is a clique 
 of size at least $k$.

 Given an instance $(H,k)$ of CLIQUE, we construct an undirected graph $G=(V,E)$. 

Let $|V(H)|=n$. Define the vertex set $$V(G):= V(H)\cup A \cup \{a,b,u\}\cup X_1\cup X_2$$ where $a, b, u$ are vertices and $A, B, X_1, X_2$ are sets of vertices with sizes $|X_1|=n^3, |X_2|=n^2-n$ and $|A|=n^2$. For every vertex $v_i\in V(H)$, we lebel $n$ vertices of $A$ as $V_i=\{v_{i,1},...,v_{i,n}\}$. 
The edge set $E(G)$ is defined as 
$$E(G)=E(H)\bigcup a\times \{\{b\}\cup X_1\cup A\}\bigcup b\times \{X_1\cup V(H)\cup X_2\}\bigcup V(H)\times X_2\bigcup u\times \{A\cup V(H) \cup X_2\}\bigcup_{ \forall i\in [n]} v_i\times V_i $$ 
Note that all the edges are undirected. See Figure below.
 \begin{figure}
  \centering
   \includegraphics[width=0.35\textwidth]{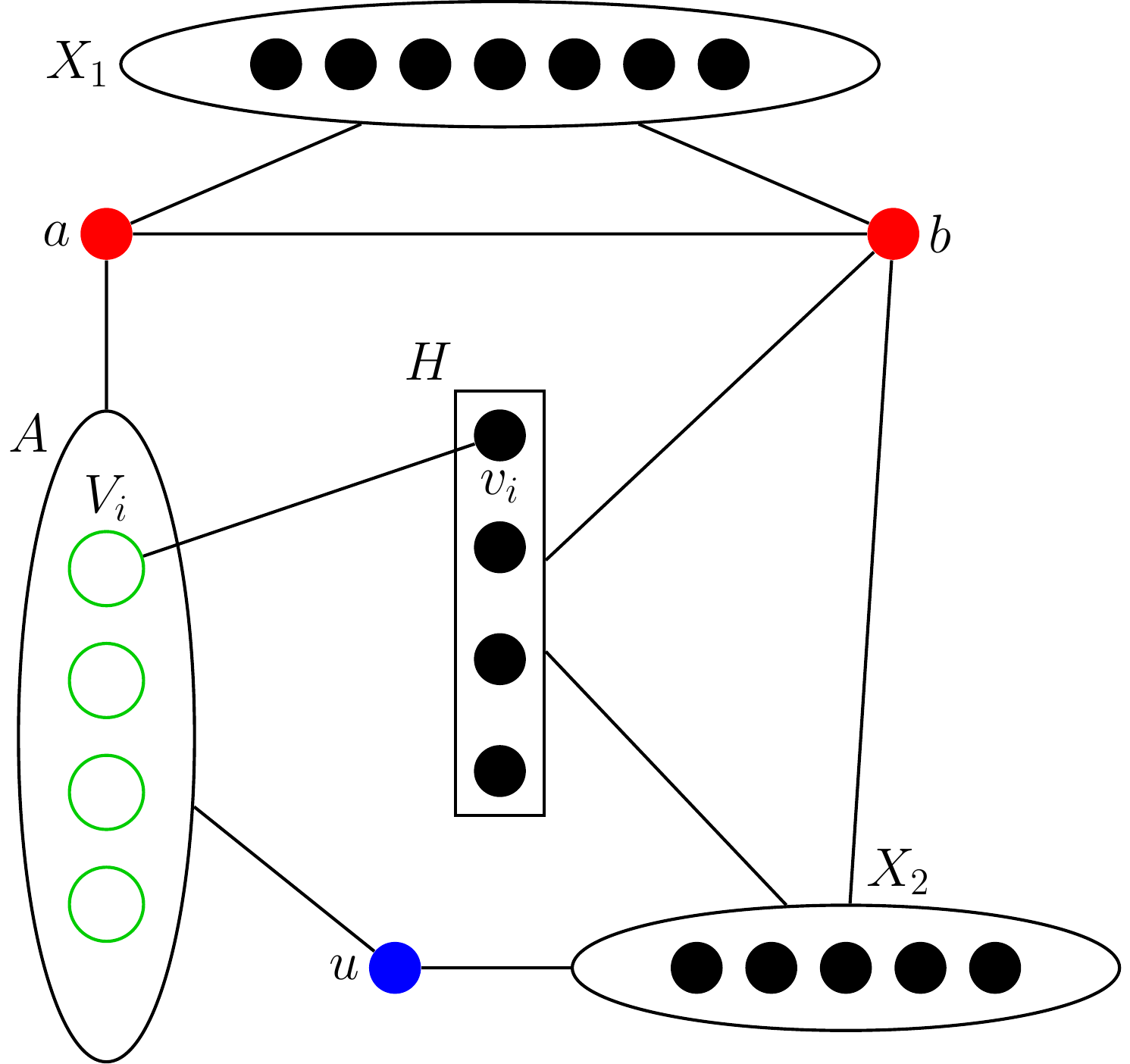}
 \end{figure}\\
 \begin{claim}
  $H$ has a clique of size $k$ if and only if $G$ has a 2-club of size $n^3+n^2+(k-1)n+k+2$.
 \end{claim}
 \begin{proof}
 Let $S$ be a clique of size $k$ in $H$. Then, $X_1\cup X_2 \cup S\cup (N(S)\cap A) \cup \{a,b\}$ is a 2-club of size $n^3+n^2+(k-1)n+k+2$.

Let $Y$ be a $2$-club of size $n^3+n^2+(k-1)n+k+2$ in $G$.

By size consideration $X_1\subset Y$. If $b\notin Y$, then none of $B$ and $X_2$ can be in $Y$. Consequently, the size of any $2$-club in $G$ can be $n^3+n^2+1$. Hence, we must have that $b\in Y$. By similar reasoning, we have that $a\in Y$. 

If $A\cap Y= \emptyset$, then the size of the largest 2-club can be at most $n^3+n^2+2$ implying that $Y$ must intersect with $A$. Moreover, $|A\cap Y|$ must be a multiple of $n$ as for $v_i\in A$ contained in $Y$, the whole subset $V_i\subset A$ can be included in $Y$ preserving the $2$-club property. If $|A\cap Y|<(k-1)n$, then size of the maximum 2-club can be at most $n^3+n^2+(k-1)n+2$, the size of $X_1\cup X_2\cup V(H)\cup \{a,b\}\cup (A\cap Y)$ which is less than $n^3+n^2+(k-1)n+k+2$. Hence at least $k$ vertices in $V(H)\cap Y$ have neighbors in $A\cap Y$. This also implies that $V(H)\cap Y$ forms a clique in $H$. If $\{x,y\}\in V(H)\cap Y$ are not adjacent and have neighbors $\{x',y'\}\in A\cap Y$. Then, there is no path of length $\leq 2$ between $x$ and $y'$. Hence, $H$ has a clique of size $k$.
\end{proof}

\end{proof}
\paragraph{Acknowledgements}
I am grateful for the fruitful discussions with Daniel Lokshtanov and Markus Dregi.
\bibliographystyle{alpha}	
\bibliography{biblio}
\end{document}